\documentclass[11pt]{article}
\usepackage{amssymb,amsmath}
\usepackage{float}
\usepackage{color,fullpage}
%% The amsthm package provides extended theorem environments
\usepackage{amsthm}
\newtheorem{theorem}{Theorem}
\newtheorem{corollary}{Corollary}
\newtheorem{lemma}{Lemma}

\newtheorem{remark}{Remark}

%% The lineno packages adds line numbers. Start line numbering with
%% \begin{linenumbers}, end it with \end{linenumbers}. Or switch it on
%% for the whole article with \linenumbers.
%% \usepackage{lineno}

\begin{document}

\title{ New bounds for circulant Johnson-Lindenstrauss embeddings}
\author{
Hui Zhang\thanks{Department of Mathematics and Systems Science,
College of Science, National University of Defense Technology,
Changsha, Hunan, China, 410073. Email: \texttt{hhuuii.zhang@gmail.com}}
\and Lizhi Cheng\thanks{Department of Mathematics and Systems Science,
College of Science, National University of Defense Technology,
Changsha, Hunan, China, 410073. Email: \texttt{clzcheng@nudt.edu.cn}}}
\date{\today}

\maketitle

%Changsha, Hunan, 410073, P.R.China.  Emails: \texttt{hhuuii.zhang@gmail.com};           \texttt{clzcheng@nudt.edu.cn}.}}
          %{Put the URL for your home page here if you have one}

          %Use \thanks statements for acknowledgements of grants and
          %support. They will appear below all the authors' addresses, so be
          %specific about which author is thanking whom:

          %\thanks{}

%\pagestyle{myheadings} \markboth{circulant JL embeddings}{Hui Zhang,  LiZhi Cheng}\maketitle

\begin{abstract}
This paper analyzes circulant Johnson-Lindenstrauss (JL) embeddings which, as an important class of structured random JL embeddings, are formed by randomizing the column signs of a circulant matrix generated by a random vector. With the help of recent decoupling techniques and matrix-valued Bernstein inequalities, we obtain a new bound $k=O(\epsilon^{-2}\log^{(1+\delta)} (n))$ for Gaussian circulant JL embeddings. Moreover, by using the Laplace transform technique (also called Bernstein's trick), we extend the result to subgaussian case. The bounds in this paper offer a small improvement over the current best bounds for Gaussian circulant JL embeddings for certain parameter regimes and are derived using more direct methods.
\end{abstract}

\textbf{Keywords:} 
Johnson-Lindenstrauss embedding, circulant matrix, Laplace transform, decoupling technique, matrix-valued Bernstein inequality

\section{Introduction}\label{intro}
The Johnson-Lindenstrauss (JL) lemma \cite{JL} is by now a standard technique in high dimensional data processing. The lemma shows the existence, with high probability, of JL embeddings, or linear maps $A \in \mathbb{R}^d \rightarrow \mathbb{R}^k$ (with $k < d$) which embed a fixed set of $n$ points $\{x_1\cdots x_n\}\subset \mathbb{R}^d$ into $\mathbb{R}^k$ with distortion at most $\epsilon$. The best known embedding dimension $k$, as is achieved by e.g., Gaussian random matrices, is $k=O(\epsilon^{-2}\log(n))$. Recently, there is growing interest in analyzing structured random JL embeddings which, unlike Gaussian random matrices, have fast matrix-vector multiplication routines. In this paper, we focus on circulant JL embeddings which, as an important class of such structured random JL embeddings, are formed by randomizing the column signs of a circulant matrix generated by a random vector. The first result for circulant JL embeddings might be formulated as follows:

\begin{theorem}\label{theo1}(\cite{Hinrichs})
Let $x^1,x^2,\cdots, x^n$ be $n$ points in the $d$-dimensional
Euclidean space $\mathbb{R}^d$. Let $\epsilon \in (0,\frac{1}{2})$
and let $k=O(\epsilon^{-2}\log^3(n))$ be a natural number. Assume that
$f$ is a composition of a $k\times d$ random circulant matrix
$M_{a,k}$ with a $d\times d$ random diagonal matrix
$D_{\varkappa}$, i.e., $f(x)=\frac{1}{\sqrt{k}}M_{a,k}
D_{\varkappa}x$. Then with probability at least 2/3 the following
holds
\begin{equation}
(1-\epsilon)\|x^i-x^j\|_2^2\leq \|f(x^i)-f(x^j)\|_2^2 \leq
(1+\epsilon)\|x^i-x^j\|_2^2,~~ i,j,=1,\cdots, n.
\end{equation}
\end{theorem}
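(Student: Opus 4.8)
\medskip
\noindent\emph{Proof proposal.} The plan is to run the standard Johnson--Lindenstrauss argument, reduced to a single vector, and then to peel off the two sources of randomness --- the column signs $\varkappa$ and the Gaussian generator $a$ --- one at a time. Since $f$ is linear, $f(x^i)-f(x^j)=f(x^i-x^j)$, so it suffices to prove that for every fixed unit vector $u\in\mathbb{R}^d$ one has $\Pr\big[\,\big|\,\|f(u)\|_2^2-1\,\big|>\epsilon\,\big]=O(1/n^2)$; applying this to $u=(x^i-x^j)/\|x^i-x^j\|_2$ and union bounding over the $\binom{n}{2}$ pairs then yields the statement with probability at least $2/3$.

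Next, fix $u$, condition on $\varkappa$, and set $z:=D_\varkappa u$, still a unit vector. Each row of $M_{a,k}$ is a cyclic shift of $a$, so the $l$-th coordinate of $\sqrt k\,f(u)=M_{a,k}z$ equals $\langle a,P_lz\rangle$ for cyclic-shift permutation matrices $P_0,\dots,P_{k-1}$, and hence
\begin{equation}\label{eq:qform}
\|f(u)\|_2^2=a^{\top}Q\,a,\qquad Q:=\frac1k\sum_{l=0}^{k-1}(P_lz)(P_lz)^{\top}\succeq0,\qquad \operatorname{tr}Q=\|z\|_2^2=1 .
\end{equation}
Thus $\|f(u)\|_2^2$ is a Gaussian quadratic form with $\mathbb{E}_a[a^{\top}Qa]=\operatorname{tr}Q=1$, and the Hanson--Wright / Laurent--Massart deviation inequality --- which one proves precisely by bounding the Laplace transform of $\sum_i\lambda_i(g_i^2-1)$, with $\lambda_i$ the eigenvalues of $Q$ --- gives, for $\epsilon\in(0,1)$,
\begin{equation}\label{eq:hw}
\Pr_a\Big[\,\big|a^{\top}Qa-1\big|>\epsilon\,\Big]\ \le\ 2\exp\!\Big(-c\,\min\!\big(\epsilon^2/\|Q\|_F^2,\ \epsilon/\|Q\|_{2\to2}\big)\Big).
\end{equation}
It remains to show that, with $\varkappa$-probability $1-O(1/n^2)$, both $\|Q\|_F^2$ and $\|Q\|_{2\to2}$ are $O(\log(dn))/k$; plugging this into \eqref{eq:hw} and asking the bound to be $O(1/n^2)$ then forces $k=O\big(\epsilon^{-2}\log(dn)\log n\big)$.

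For the matrix norms I would use the power spectrum of $z$. The full sum $\widetilde Q:=\sum_{l=0}^{d-1}(P_lz)(P_lz)^{\top}$ is the circulant matrix of the periodic autocorrelation of $z$, so its eigenvalues are $|\hat z(\omega)|^2$ over the $d$-th roots of unity $\omega$, where $\hat z(\omega)=\sum_j z_j\omega^j$ and $\sum_\omega|\hat z(\omega)|^2=d\|z\|_2^2=d$. Since $0\preceq kQ\preceq\widetilde Q$, this yields $\|Q\|_{2\to2}\le\frac1k\max_\omega|\hat z(\omega)|^2$; and expanding $\|Q\|_F^2=\frac1{k^2}\sum_{l,l'=0}^{k-1}\langle P_lz,P_{l'}z\rangle^2$ in autocorrelation lags and invoking Parseval gives $\|Q\|_F^2\lesssim\frac1{kd}\sum_\omega|\hat z(\omega)|^4\le\frac1k\max_\omega|\hat z(\omega)|^2$. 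Finally, $\hat z(\omega)=\sum_j\varkappa_j u_j\,\omega^j$ has real and imaginary parts equal to Rademacher sums with $\ell_2$-coefficient norm $\le\|u\|_2=1$, so Hoeffding's inequality gives $\Pr_\varkappa[|\hat z(\omega)|^2>s]\le 4e^{-s/4}$, and a union bound over the $d$ frequencies gives $\max_\omega|\hat z(\omega)|^2\lesssim\log(dn)$ with $\varkappa$-probability $1-O(1/n^2)$, \emph{uniformly} over unit $u$. Hence $k=O\big(\epsilon^{-2}\log(dn)\log n\big)$ suffices; under the (standard, harmless) assumption that $d$ is polynomial in $n$, this is of the claimed form $O(\epsilon^{-2}\log^3 n)$.

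The delicate step is the third one. Unlike for an i.i.d.\ Gaussian matrix, the $k$ coordinates of $f(u)$ all share the same vector $a$ and are strongly correlated, so the ``sum of $k$ independent squares'' picture is unavailable; instead one must exploit that the sign randomization $D_\varkappa$ flattens the power spectrum of $z$, and it is precisely the uniform-in-$u$ control of $\max_\omega|\hat z(\omega)|^2$ --- equivalently, of $\|Q\|_F$ and $\|Q\|_{2\to2}$ --- that incurs the polylogarithmic overhead. Sharpening exactly this estimate --- replacing the crude bound $\sum_\omega|\hat z(\omega)|^4\le(\max_\omega|\hat z(\omega)|^2)\sum_\omega|\hat z(\omega)|^2$ and the naive union bound by the decoupling and matrix-Bernstein inequalities, and running the Laplace-transform step directly so as to also cover subgaussian $a$ --- is what drives the improved exponents in the rest of the paper.
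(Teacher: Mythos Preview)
Your argument is essentially correct, but it is \emph{not} the proof of Theorem~\ref{theo1} from \cite{Hinrichs}. The paper does not reprove Theorem~\ref{theo1}; it only quotes the key concentration estimate from \cite{Hinrichs}, namely $\mathbb{P}\big(|\|f(x)\|_2^2-k|\ge\epsilon k\big)\le\exp(-c(k\epsilon^2)^{1/3})$, and it is precisely the exponent $1/3$ that produces the $\log^3 n$ bound there. Your route---condition on $\varkappa$, write $\|f(u)\|_2^2$ as a Gaussian quadratic form, control its spectrum via the DFT of $D_\varkappa u$ and Hoeffding, then apply Laurent--Massart/Hanson--Wright---is exactly the decoupling strategy of Vyb\'{\i}ral~\cite{Vyb} and of the present paper's proof of Theorem~\ref{lem0}, not that of \cite{Hinrichs}. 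Concretely, your $kQ$ is the paper's $Y^TY$, your $\|Q\|_{2\to2}$ and $\|Q\|_F^2$ are the paper's $\|\mu\|_\infty/k$ and $\|\mu\|_2^2/k^2$, and your Hanson--Wright step is the paper's Lemma~\ref{lem}. Consequently you in fact obtain $k=O(\epsilon^{-2}\log(dn)\log n)$, which under $d=n^{O(1)}$ is $O(\epsilon^{-2}\log^2 n)$---the Vyb\'{\i}ral bound, one logarithm \emph{better} than Theorem~\ref{theo1}; your closing sentence undersells this.

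Two remarks. First, your bound carries a $\log d$ factor that the stated Theorem~\ref{theo1} does not, so the ``standard, harmless'' assumption $d=n^{O(1)}$ is doing genuine work; without it you have not established Theorem~\ref{theo1} as written. Second, the paper's own contribution (Theorem~\ref{lem0}) sharpens exactly the spectral step you isolate: instead of bounding $\|kQ\|_{2\to2}\le\|\widetilde Q\|_{2\to2}=\max_\omega|\hat z(\omega)|^2$ via Hoeffding and a union bound over $d$ frequencies, it writes $Y=\sum_i\varkappa_i B_i$ and applies the matrix Bernstein inequality (Lemma~\ref{lem1}) directly, obtaining a tail of the form $(d+k)e^{-t^2/2}$ for $\|Y\|$ and thereby the $\log^{1+\delta}n$ bound.
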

Here, the random circulant matrix $M_{a,k}$ is defined by random vector $a=(a_0,\cdots, a_{d-1})$ whose entries are independent Bernoulli variables
or independent normally distributed variables. Concretely,
$$
M_{a,k} =\left(\begin{array}{ccccc}
a_0 & a_1 & a_2 & \cdots & a_{d-1}\\
a_{d-1} & a_0 & a_1 & \cdots & a_{d-2}\\
\vdots & \vdots & \vdots & \ddots & \vdots\\
a_{d-k+1} & a_{d-k+2} & a_{d-k+3} & \cdots &
a_{d-k}\end{array}\right)\in \mathbb{R}^{k \times d}.
$$
The random diagonal matrix $D_{\varkappa}$ is
$$
D_{\varkappa}=\left(\begin{array}{ccccc}
\varkappa_0 & 0 & 0 & \cdots & 0\\
0 & \varkappa_1 & 0 & \cdots & 0\\
\vdots & \vdots & \vdots & \ddots & \vdots\\
0 & 0 & 0 & \cdots & \varkappa_{d-1}\end{array}\right)\in \mathbb{R}^{d \times d},
$$
where
$\varkappa=(\varkappa_0,\varkappa_1,\cdots,\varkappa_{d-1})$ is
a Bernoulli sequence, i.e., each entry of $\varkappa$ takes the
values $+1$ or $-1$ with probability $1/2$.  Here and thereafter, we will call the mapping $f$ (sub)gaussian circulant JL embedding when random vector $a$ is set as (sub)gaussian random vector.

Compared with the standard bound $k=O(\epsilon^{-2}\log (n))$, Theorem \ref{theo1} only established a worse bound $k=O(\epsilon^{-2}\log^3 (n))$.
Later on, Vyb\'{\i}ral \cite{Vyb} improved the bound to
$k=O(\epsilon^{-2}\log^2 (n))$ by employing the discrete Fourier
transform and singular value decomposition to deal with the
dependence caused by the circulant structure. Recently, by randomizing the column signs of matrices that have Restricted Isometry Property (RIP)\cite{Cand},  Krahmer and Ward \cite{Krahmer} further
improved the bound to $k=O(\epsilon^{-2}\log(n)\log^4(d))$ which is better than another recent bound
$k=O(\epsilon^{-4}\log(n)\log^4(d))$ by Ailon and Liberty \cite{Ailon}. Most recently, Krahmer, Mendelson, and Rauhut \cite{KMR} derived new bounds for the RIP of partial circulant matrices. By combining these bounds with the connection between RIP and JL in \cite{Krahmer}, the current best bound for Gaussian circulant JL embedding reads
\begin{equation}\label{best}
k=O(\epsilon^{-2}\log(n)(\log d)^2(\log\log d)^2).
\end{equation}
 We summarize these JL bounds in Table 1.

\begin{table}[ht]
\begin{center}\begin{tabular}{c|c}\hline
work & JL bound  \\\hline
\cite{Hinrichs}    & $k=O(\epsilon^{-2}\log^3 (n))$ \\\hline
\cite{Vyb}      & $k=O(\epsilon^{-2}\log^2 (n))$ \\\hline
\cite{Ailon}    & $k=O(\epsilon^{-4}\log(n)\log^4(d))$\\\hline
\cite{Krahmer}  & $k=O(\epsilon^{-2}\log(n)\log^4(d))$ \\\hline
\cite{Krahmer}, \cite{KMR}  & $k=O(\epsilon^{-2}\log(n)(\log d)^2(\log\log d)^2)$ \\\hline
\end{tabular}\end{center}
\caption{Bounds for Gaussian circulant JL embeddings}\label{JL}
\end{table}

% But as one can see there still needs to remove the additional factor $\log^4(d)$.

\subsection{Main results}
In this study, we combine the decoupling technique in \cite{Vyb} with the matrix-value Bernstein inequality in \cite{Tro1} to derive a new and improved bound  $k=O(\epsilon^{-2}\log^{(1+\delta)} (n))$ for Gaussian circulant JL embeddings.

 Traditionally, the key step in JL lemma is to estimate the probablity bounds of $\mathbb{P}(\|f(x)\|_2^2\geq (1+\epsilon)k)$ and $\mathbb{P}(\|f(x)\|_2^2\leq (1-\epsilon)k)$. The authors in \cite{Hinrichs} obtained the following estimations:
\begin{equation}\label{ineq01}
\mathbb{P}(\|f(x)\|_2^2\geq (1+\epsilon)k)\leq
\exp(-c(k\epsilon^2)^{1/3})
\end{equation}
and
\begin{equation}\label{ineq02}
\mathbb{P}(\|f(x)\|_2^2\leq (1-\epsilon)k)\leq \exp(-c(k\epsilon^2)^{1/3}),
\end{equation}
where $c$ is an absolute constant. One can see that it is just the power $1/3$ making the bound to be $k\epsilon^2\sim\log^3 (n)$,
 i.e., $k=O(\epsilon^{-2}\log^3 (n))$.
Vyb\'{\i}ral \cite{Vyb} improved the right-hand side of inequalities (\ref{ineq01}) and (\ref{ineq02}) to $\exp(-\frac{ck\epsilon^2}{\log n})$, and hence directly derived a better bound $k=O(\epsilon^{-2}\log^2 (n))$. Our main result, stated in Theorem \ref{lem0}, is more general and can recover the result in \cite{Vyb} under a strictly weaker constraint on number $n$ if $d>12$. Also, the bound for Gaussian circulant JL embeddings, derived in Corollary \ref{cor1}, offers an improvement over existing bounds.
\begin{theorem}[Main result]\label{lem0} Let $k\leq d$ be natural numbers and let $\epsilon \in
(0,\frac{1}{2})$. Let $x \in \mathbb{R}^d$ be a unit vector,
$a=(a_0,a_1,\cdots,a_{d-1})\sim N_d(0, I_d)$. Assume that
$f$ is a composition of a $k\times d$ Gaussian circulant matrix
$M_{a,k}$ with a $d\times d$ random diagonal matrix
$D_{\varkappa}$, i.e., $f(x)= M_{a,k}
D_{\varkappa}x$. Then with probability at least $1-(d+k)e^{-\frac{\tau\log^\delta n}{2}}$ it holds
\begin{equation}\label{inequ1}
\mathbb{P}(\|f(x)\|_2^2\geq (1+\epsilon)k)\leq
\exp\left(-\frac{c(\tau)k\epsilon^2}{\log^{\delta} n}\right)
\end{equation}
and
\begin{equation}\label{inequ2}
\mathbb{P}(\|f(x)\|_2^2\leq (1-\epsilon)k)\leq \exp\left(-\frac{c(\tau)k\epsilon^2}{\log^{\delta} n}\right),
\end{equation}
where $c(\tau)=\frac{1}{8\tau}$, $\delta$ and $\tau$ are positive parameters.
 \end{theorem}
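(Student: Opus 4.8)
The plan is to condition on the sign vector $\varkappa$, reduce the inner probability to the concentration of a Gaussian quadratic form, and then dispose of the randomness of $\varkappa$ with a matrix Bernstein inequality. Since circular convolution is commutative, $f(x)=M_{a,k}D_\varkappa x$ equals $T(\varkappa)\,a$, where $T=T(\varkappa)\in\mathbb{R}^{k\times d}$ is the partial circulant matrix whose $i$-th row is the $i$-th cyclic shift of $D_\varkappa x$, i.e.\ $T_{i\ell}=\varkappa_{(\ell+i)\bmod d}\,x_{(\ell+i)\bmod d}$. Then $\|f(x)\|_2^2=\|Ta\|_2^2=a^{\top}T^{\top}Ta$, and since $a\sim N_d(0,I_d)$ this is a Gaussian chaos of order two: writing $\lambda_1,\dots,\lambda_k\ge 0$ for the eigenvalues of $TT^{\top}$, one has $\|f(x)\|_2^2=\sum_{i=1}^{k}\lambda_i g_i^2$ with $g_i$ i.i.d.\ standard normal. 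The parameters are read off directly: $\sum_i\lambda_i=\|T\|_F^2=k\|D_\varkappa x\|_2^2=k=\mathbb{E}_a\|f(x)\|_2^2$, $\max_i\lambda_i=\|T\|_{\mathrm{op}}^2$, and $\sum_i\lambda_i^2=\|TT^{\top}\|_F^2\le k\,\|T\|_{\mathrm{op}}^2$.

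For the conditional tails I would use the Laplace transform (Bernstein's trick). For $0<\theta\le 1/(4\|T\|_{\mathrm{op}}^2)$ one has $\mathbb{E}_a\exp(\theta\|f(x)\|_2^2)=\prod_i(1-2\theta\lambda_i)^{-1/2}$, and since $-\tfrac12\log(1-t)\le\tfrac12(t+t^2)$ for $0\le t\le\tfrac12$, the moment identities above give $\mathbb{E}_a\exp(\theta\|f(x)\|_2^2)\le\exp(\theta k+2\theta^2 k\|T\|_{\mathrm{op}}^2)$. Applying Markov's inequality to $\exp(\theta\|f(x)\|_2^2)$ and taking $\theta=\epsilon/(4\|T\|_{\mathrm{op}}^2)$ (admissible since $\epsilon<\tfrac12$) yields $\mathbb{P}_a(\|f(x)\|_2^2\ge(1+\epsilon)k)\le\exp\!\left(-k\epsilon^2/(8\|T\|_{\mathrm{op}}^2)\right)$; the lower tail (\ref{inequ2}) follows the same way with $\theta<0$. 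Hence on the event $\{\|T\|_{\mathrm{op}}^2\le\tau\log^{\delta}n\}$ both (\ref{inequ1}) and (\ref{inequ2}) hold with $c(\tau)=1/(8\tau)$.

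It therefore remains to bound $\mathbb{P}_\varkappa(\|T\|_{\mathrm{op}}^2>\tau\log^{\delta}n)$ by $(d+k)e^{-\tau\log^{\delta}n/2}$, and this is where the decoupling/matrix-Bernstein combination enters. Write $T=\sum_{m=0}^{d-1}\varkappa_m\,(x_m P_m)$, where $P_m\in\{0,1\}^{k\times d}$ is the partial permutation matrix with $(P_m)_{i\ell}=1$ iff $(\ell+i)\equiv m\pmod d$; the summands are independent, mean-zero, with $\|x_m P_m\|_{\mathrm{op}}=|x_m|\le 1$, and a short computation gives $\sum_m x_m^2 P_m P_m^{\top}=\|x\|_2^2 I_k=I_k$ and $\big\|\sum_m x_m^2 P_m^{\top}P_m\big\|\le\|x\|_2^2=1$, so the matrix variance parameter equals $1$. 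Tropp's matrix Bernstein inequality \cite{Tro1} then gives $\mathbb{P}_\varkappa(\|T\|_{\mathrm{op}}\ge s)\le(d+k)\exp\!\left(-\tfrac{s^2/2}{1+s/3}\right)$, and choosing $s$ to be the appropriate multiple of $(\tau\log^{\delta}n)^{1/2}$ and tracking the parameters produces the claimed probability; combining this with the previous paragraph completes the argument.

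The hard part is this last step. Because the building blocks $\varkappa_m x_m P_m$ are bounded rather than Gaussian, the matrix Bernstein estimate is genuinely sub-Gaussian only up to deviations of order $\sigma^2/R=O(1)$ and degrades to a sub-exponential tail beyond that, so one cannot take the threshold for $\|T\|_{\mathrm{op}}^2$ as small as $O(\log n)$; the free parameters $\delta,\tau$ are precisely what is used to navigate this crossover, and this is what ultimately costs the extra $\log^{\delta}n$ factor relative to the optimal $k=O(\epsilon^{-2}\log n)$. The remaining care is bookkeeping the constants in the Laplace-transform step so that $c(\tau)$ comes out exactly $1/(8\tau)$; replacing the Gaussian-chaos estimate by the analogous subgaussian moment bound is the route to the subgaussian version promised in the abstract.
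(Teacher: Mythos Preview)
Your overall architecture matches the paper's: rewrite $f(x)=T(\varkappa)\,a$, bound $\|T\|_{\mathrm{op}}$ via a matrix deviation inequality for the Rademacher sum $T=\sum_m\varkappa_m x_m P_m$ (the paper computes the same variance parameter $\sigma^2=1$ from $\sum_m x_m^2 P_mP_m^\top=I_k$ and $\|\sum_m x_m^2 P_m^\top P_m\|\le 1$), and then apply a Gaussian quadratic-form tail conditionally on $\varkappa$. Your direct Laplace-transform computation for the conditional step is a clean alternative to the Laurent--Massart lemma (Lemma~\ref{lem}) the paper quotes, and it lands on the same constant $c(\tau)=1/(8\tau)$.

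There is one slip in the spectral step. You invoke the \emph{general} matrix Bernstein inequality, with its $1+s/3$ denominator, and then worry about a sub-Gaussian/sub-exponential crossover because the summands are ``bounded rather than Gaussian''. But the $\varkappa_m$ are Rademacher, and for Rademacher (or Gaussian) series Tropp's result gives the \emph{pure} sub-Gaussian tail $\mathbb{P}(\|T\|\ge t)\le (d+k)\,e^{-t^2/(2\sigma^2)}$ with no Bernstein correction; this is exactly the paper's Lemma~\ref{lem1}. Setting $t^2=\tau\log^\delta n$ then reproduces the stated probability $1-(d+k)e^{-\tau\log^\delta n/2}$ on the nose, whereas your general-Bernstein bound would not match it without further fiddling. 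Correspondingly, your last-paragraph diagnosis is off: the extra $\log^\delta n$ factor does \emph{not} come from a bounded-summand crossover; it is present even with the pure sub-Gaussian matrix tail, and arises simply because $\|T\|_{\mathrm{op}}^2$ must be allowed to reach size $\tau\log^\delta n$ for the outer $\varkappa$-event to have the required probability, and that quantity then sits in the denominator of the conditional Gaussian tail.
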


\begin{remark}
Setting $\tau=2$ and $\delta=1$ in Theorem \ref{lem0} gives $c(\tau)=\frac{1}{16}$ and
$$1-(d+k)e^{-\frac{\tau\log^\delta n}{2}}=1-\frac{d+k}{n}.$$
Thus, letting $1-\frac{d+k}{n}\geq \frac{5}{6}$, i.e., $n\geq \sqrt{6(d+k)}$,
Theorem \ref{lem0} rederives the inequalities (3.4) and (3.5) in \cite{Vyb} with an explicit value $c=\frac{1}{16}$, and the condition on $n$ is strictly relaxed from $n\geq d$ to $n\geq \sqrt{6(d+k)}$ when $d>12$ since it holds $n\geq d>\sqrt{6(d+k)}$ for any $k<d$; for more details please refer to Lemma 3.1 and the proof of Theorem 1.3 in \cite{Vyb}.
\end{remark}

The following corollary follows by a union bound over $C^2_n$ pairs of points; note that we have set $\tau=2$ for simplicity.

\begin{corollary}\label{cor1}
Let $x^1,x^2,\cdots, x^n$ be $n$ points in the $d$-dimensional
Euclidean space $\mathbb{R}^d$. Let $\epsilon \in (0,\frac{1}{2})$
and let $k=O(\epsilon^{-2}\log^{(1+\delta)}(n))$ be a natural number, where $\delta>0$. Assume that
$f$ is a composition of a $k\times d$ Gaussian circulant matrix
$M_{a,k}$ with a $d\times d$ random diagonal matrix
$D_{\varkappa}$, i.e., $f(x)=\frac{1}{\sqrt{k}}M_{a,k}
D_{\varkappa}x$. Then with probability at least $\frac{2}{3}\left(1-(d+k)e^{-\log^\delta n}\right)$,  the following
holds
\begin{equation}
(1-\epsilon)\|x^i-x^j\|_2^2\leq \|f(x^i)-f(x^j)\|_2^2 \leq
(1+\epsilon)\|x^i-x^j\|_2^2,~~i,j,=1,\cdots, n.
\end{equation}
\end{corollary}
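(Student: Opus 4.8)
The plan is to derive Corollary \ref{cor1} directly from Theorem \ref{lem0} by the standard union-bound argument, taking care of the extra ``outer'' randomness coming from the Gaussian vector $a$. First I would fix $\tau = 2$ and a pair of indices $i \neq j$, set $y = x^i - x^j$, and apply Theorem \ref{lem0} to the unit vector $x = y/\|y\|_2$ (if $y = 0$ there is nothing to prove). Since $f$ is linear, $\|f(x^i) - f(x^j)\|_2^2 = \|y\|_2^2 \cdot \|f(x/\sqrt k)\|_2^2$ with the $\tfrac1{\sqrt k}$ normalization as in the corollary, so the one-sided tail bounds (\ref{inequ1}) and (\ref{inequ2}) translate into
\begin{equation*}
\mathbb{P}\Bigl(\bigl|\|f(x^i)-f(x^j)\|_2^2 - \|x^i-x^j\|_2^2\bigr| \geq \epsilon\|x^i-x^j\|_2^2\Bigr) \leq 2\exp\!\left(-\frac{c(2)k\epsilon^2}{\log^{\delta} n}\right),
\end{equation*}
valid on the event $\mathcal{E}$ of probability at least $1-(d+k)e^{-\log^\delta n}$ on which Theorem \ref{lem0} holds.

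Next I would union-bound over the $\binom{n}{2} = C_n^2 \le n^2/2$ pairs: conditionally on $\mathcal{E}$, the probability that some pair fails is at most $n^2 \exp(-c(2)k\epsilon^2/\log^\delta n)$. Choosing $k = C\epsilon^{-2}\log^{1+\delta}(n)$ with the absolute constant $C$ large enough that $C\, c(2) \ge 3$, this exponent is at least $3\log n$, so the failure probability conditional on $\mathcal{E}$ is at most $n^2 \cdot n^{-3} = 1/n \le 1/3$ for $n \ge 3$ (absorbing the small-$n$ cases into $C$). Hence conditionally on $\mathcal{E}$ the distortion bound holds simultaneously for all pairs with probability at least $2/3$. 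Finally, removing the conditioning, the total success probability is at least $\tfrac23 \cdot \mathbb{P}(\mathcal{E}) \ge \tfrac23\bigl(1-(d+k)e^{-\log^\delta n}\bigr)$, which is exactly the claimed bound.

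The only genuinely delicate point — and the one I would spell out carefully — is the two-layer probability structure: Theorem \ref{lem0} is itself a statement that \emph{with high probability over the circulant generator $a$} the conditional tail bounds over the sign vector $\varkappa$ hold, so the union bound over pairs must be performed in the inner ($\varkappa$) probability space after conditioning on the good event $\mathcal{E}$ for $a$, and then the two probabilities are multiplied (using independence of $a$ and $\varkappa$, or simply $\mathbb{P}(A\cap B) \ge \mathbb{P}(B) - \mathbb{P}(B^c|A)\ge\dots$ — more cleanly, $\mathbb{P}(\text{all pairs good}) \ge \mathbb{P}(\mathcal{E})\,\mathbb{P}(\text{all pairs good}\mid \mathcal{E})$). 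Everything else is the routine JL bookkeeping: bounding $C_n^2 \le n^2$, checking that the constant in $k = O(\epsilon^{-2}\log^{1+\delta} n)$ can be chosen to make $n^2 e^{-c(2)k\epsilon^2/\log^\delta n} \le 1/3$, and noting that the linearity of $f$ reduces the pairwise statement to the single-vector statement of Theorem \ref{lem0}.
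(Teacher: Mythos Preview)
Your approach is exactly the paper's: set $\tau=2$ and union-bound Theorem~\ref{lem0} over the $C_n^2$ pairs (the paper's entire proof is the single sentence preceding the corollary). One correction, though: you have the two layers of randomness swapped. From the proof of Theorem~\ref{lem0}, the high-probability ``outer'' event is the spectral bound $\|Y\|^2\le\tau\log^\delta n$ from Lemma~\ref{lem2}, and $Y$ is built from $\varkappa$ and the fixed unit vector $x$---not from $a$; the ``inner'' tail bounds (\ref{inequ1})--(\ref{inequ2}) are then over the Gaussian vector $a$ (via $b=V^Ta$). A related subtlety, which the paper's one-line argument also glosses over: because $Y$ depends on $x$, the good event $\mathcal{E}=\mathcal{E}_x$ is different for each pair $(i,j)$, so you cannot condition on a \emph{single} $\mathcal{E}$ and then union-bound only the inner probabilities over pairs. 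Strictly speaking one also needs a union bound over the $C_n^2$ outer events $\mathcal{E}_{(x^i-x^j)/\|x^i-x^j\|}$, which would put an extra factor of order $n^2$ in front of $(d+k)e^{-\log^\delta n}$ in the final probability; the corollary as stated (and your write-up) suppresses this factor.
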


\begin{remark}
Compared with the current best bound (\ref{best}), Corollary \ref{cor1} only offers an improved bound for a relatively small non asymptotic range of $n$. In fact, in order for the stated probability to be positive, we derive that $\log(d+k)<\log^\delta(n)$ and hence $\log(d)<\log^\delta (n)$. On the other hand, we need $\log^\delta (n)\leq \log^2(d)$ to have an improved estimate over (\ref{best}). Therefore, for the parameter regimes satisfying $$\log(d)< \log^\delta (n)\leq \log^2(d),$$  Corollary \ref{cor1} indeed offers an improved bound $k=O(\epsilon^{-2}\log^{(1+\delta)}(n))$ for Gaussian circulant JL embedding. However, once $n$ is sufficiently large that $\log^2(d)\leq \log^\delta(n)$, the derived bound in Corollary \ref{cor1} becomes increasingly worse than (\ref{best}). In other words, bound (\ref{best}) is asymptotically stronger than that in Corollary \ref{cor1}.
 %When $n$ is sufficiently large such that $\log^\delta (n)\geq \log^2(d)$, our bound becomes increasingly worse than (\ref{best}).
\end{remark}

\subsection{Extension}
We generalize the main result to the case of subgaussian circulant JL embedding by borrowing the Laplace transform technique (also called Bernstein's trick). Here $X$ is a subgaussian random variable with constant $\eta$ referring to $E[\exp(tX)]\leq \exp(\eta t^2)$ for some $\eta>0$. We only discuss the case of $\eta\leq 1/2$ where includes many types of random circulant matrices we are interested in. An important type is the Bernoulli circulant matrix. In fact, if $X$ is a Bernoulli random variable, then $E [\exp (tX)] = \frac{1}{2}\exp(t)+\frac{1}{2}\exp(-t) = cosh(t) \leq \exp(\frac{1}{2}t^2)$. So the Bernoulli random variable $X$ is subgaussian with $\eta = \frac{1}{2}$.  For the subgaussian case, we have the following results:

\begin{theorem}\label{lem01}
 Let $k\leq d$ be natural numbers and let $\epsilon \in
(0,\frac{1}{2})$. Let $x \in \mathbb{R}^d$ be a unit vector,
subgaussian vector $a=(a_0,a_1,\cdots,a_{d-1})$
having a uniform subgaussian constant $\eta>0$. Assume that
$f$ is a composition of a $k\times d$ subgaussian circulant matrix
$M_{a,k}$ with a $d\times d$ random diagonal matrix
$D_{\varkappa}$, i.e., $f(x)= M_{a,k}
D_{\varkappa}x$. Then with probability at least $1-(d+k)e^{-\frac{\tau\log^\delta n}{2}}$ it holds
\begin{equation}\label{inequ01}
\mathbb{P}(\|f(x)\|_2^2\geq (1+\epsilon)k)\leq
\exp\left(-\frac{c(\theta, \eta, \tau)k\epsilon^2}{\log^{2\delta} n}\right)
\end{equation}
and
\begin{equation}\label{inequ02}
\mathbb{P}(\|f(x)\|_2^2\leq (1-\epsilon)k)\leq \exp\left(-\frac{c(\theta, \eta, \tau)k\epsilon^2}{\log^{2\delta} n}\right).
\end{equation}
Here, $c(\theta, \eta, \tau)=\theta(\frac{1}{2\tau\eta}-4\theta)$ is some absolute constant, where $0<\theta <\min\{1,\frac{1}{8\eta\tau}\}$ and $\delta, \tau >0$ are fixed parameters,  the number $n$ needs to be set big enough such that $\frac{2\theta\epsilon}{\log^\delta n}<\frac{1}{2}$, and the subgaussian constant $\eta$ obeys $\frac{1}{2}\frac{1-\beta^2}{1+\beta^2}\leq \eta \leq \frac{1}{2}$ with $\beta=\frac{\theta\epsilon}{\tau\log^{2\delta}n}<\frac{1}{2}$.
\end{theorem}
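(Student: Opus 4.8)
The plan is to follow the architecture of the proof of Theorem~\ref{lem0}, replacing the single place that used Gaussianity of $a$ --- a $\chi^2$-type moment estimate --- by a direct moment-generating-function bound coming from the subgaussian hypothesis $E[\exp(tX)]\le\exp(\eta t^2)$ (Bernstein's trick). The natural reduction is to condition on the sign vector $\varkappa$ rather than on $a$. Writing $y=D_\varkappa x$, which is a unit vector deterministically since $\|x\|_2=1$, the $j$-th coordinate of $M_{a,k}D_\varkappa x$ equals $\langle a,\sigma^{-j}y\rangle$ for the cyclic shift $\sigma$, so $\|f(x)\|_2^2=a^\top R a$ with $R=R(\varkappa)=\sum_{j=0}^{k-1}(\sigma^{-j}y)(\sigma^{-j}y)^\top\succeq 0$; here $R$ is a zero-padded copy of $M_{y,k}^\top M_{y,k}$, so $\operatorname{tr}R=k$ up to the normalization $E a_i^2=1$, and the nonzero eigenvalues of $R$ are the squared singular values of the partial circulant matrix generated by the sign-flattened vector $y$. \textit{Step 1 (good event on $\varkappa$).} The sign randomization flattens the discrete Fourier transform of $y$: each $|\widehat y_m|^2$ is a Rademacher sum concentrated at its mean $\asymp 1/d$, and $\|M_{y,k}\|_{\mathrm{op}}^2\le d\max_m|\widehat y_m|^2$ since $M_{y,k}$ is a submatrix of the full circulant generated by $y$. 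Applying Bernstein's trick to each of the $d$ coefficients $|\widehat y_m|^2$ and to the $k$ truncated row norms that govern $\operatorname{tr}R$, and taking a union bound, one gets an event $\mathcal E$ depending only on $\varkappa$ with $\mathbb P(\mathcal E)\ge 1-(d+k)e^{-\tau\log^\delta n/2}$, on which simultaneously $\|R\|_{\mathrm{op}}=O(\log^\delta n)$ and $\operatorname{tr}R\in[(1-\tfrac{\epsilon}{2})k,(1+\tfrac{\epsilon}{2})k]$.

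\textit{Step 2 (moment generating function of a subgaussian quadratic form).} Fix $\varkappa\in\mathcal E$ and bound the conditional tail over $a$. By Markov, $\mathbb P(a^\top R a\ge(1+\epsilon)k)\le e^{-t(1+\epsilon)k}\,E_a\exp(t\,a^\top R a)$ for $t>0$. Split $a^\top R a=\sum_i R_{ii}a_i^2+\sum_{i\ne j}R_{ij}a_ia_j$: for the diagonal part use $E\exp(sX^2)\le(1-4\eta s)^{-1/2}$, valid for $4\eta s<1$ and obtained by linearizing the square with an auxiliary Gaussian, together with independence in $i$; for the off-diagonal subgaussian chaos, decouple against an independent copy $a'$ as in \cite{Vyb} and integrate the coordinates out one at a time, each step costing a factor $\exp(\eta t^2\cdot(\text{partial sum}))$ by the subgaussian hypothesis. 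Taking $t\asymp\theta/\log^{2\delta}n$ --- which is why $n$ must be large enough that $2\theta\epsilon/\log^\delta n<\tfrac12$, so that both $4\eta t\|R\|_{\mathrm{op}}<1$ and the successive quadratic/geometric bounds converge --- yields, after the bookkeeping, $E_a\exp(t\,a^\top R a)\le\exp\!\big(t\operatorname{tr}R+C\eta t^2\|R\|_{\mathrm{op}}\operatorname{tr}R\big)$, and the window $\tfrac12\tfrac{1-\beta^2}{1+\beta^2}\le\eta\le\tfrac12$ with $\beta=\theta\epsilon/(\tau\log^{2\delta}n)$ is precisely what lets this estimate and its $-t$ analogue close up so that $\operatorname{tr}R$ may be replaced by $k$ on both sides.

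\textit{Step 3 (optimization and conclusion).} Insert the bound into Markov, and run the same computation with $-t$ for the lower tail, the inequality $E\exp(sX)\le\exp(\eta s^2)$ being even in $s$. Optimizing over $t\asymp\theta/\log^{2\delta}n$ collapses both tails to $\exp\!\big(-c(\theta,\eta,\tau)k\epsilon^2/\log^{2\delta}n\big)$ with $c(\theta,\eta,\tau)=\theta(\tfrac1{2\tau\eta}-4\theta)$, for $0<\theta<\min\{1,\tfrac1{8\eta\tau}\}$; un-conditioning, these are conditional-on-$\varkappa$ statements holding on $\mathcal E$, whose probability was bounded in Step~1. The main obstacle is Step~2: for Gaussian $a$ the conditional law of $f(x)$ is exactly a weighted sum of $\chi^2_1$ variables, so a single sharp inequality ($\chi^2$ or the matrix-valued Bernstein inequality used in Theorem~\ref{lem0}) loses only one factor $\log^\delta n$, whereas for merely subgaussian $a$ one is forced to integrate the decoupled chaos coordinate by coordinate, and it is the accumulation of these crude per-coordinate quadratic bounds over all $d$ coordinates that degrades the denominator of the exponent from $\log^\delta n$ to $\log^{2\delta}n$ and confines $\eta$ to a neighbourhood of $\tfrac12$. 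Arranging that each integration contributes only a constant --- rather than a further logarithmic factor --- is the delicate point.
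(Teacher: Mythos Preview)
Your route diverges from the paper's in both of its two main steps.

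\textbf{Step 1 (spectral bound).} The paper does \emph{not} pass to the Fourier side. It reuses verbatim the decoupling and spectral estimate from the proof of Theorem~\ref{lem0}: write $\|f(x)\|_2^2=\|Ya\|_2^2$, take the SVD $Y=U\Sigma V^\top$, set $b=V^\top a$ and $\mu_j=|\lambda_j|^2$, and apply the same matrix Bernstein bound (Lemma~\ref{lem2}) to obtain $\|\mu\|_\infty\le\tau\log^\delta n$ with probability at least $1-(d+k)e^{-\tau\log^\delta n/2}$. Note also that $\operatorname{tr}R=\|Y\|_F^2=k$ holds \emph{exactly}, deterministically in $\varkappa$, since $\|x\|_2=1$; your interval $[(1-\tfrac\epsilon2)k,(1+\tfrac\epsilon2)k]$ and the ``$k$ truncated row norms'' are unnecessary.

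\textbf{Step 2 (MGF bound).} The paper does not split $a^\top Ra$ into diagonal and off-diagonal pieces, nor does it decouple a chaos. It stays in the diagonalized form $\sum_j\mu_j(b_j^2-1)$, observes that each $b_j=\sum_iV_{ij}a_i$ is a unit-norm linear combination of the subgaussian entries, and invokes a scalar lemma (Lemma~\ref{lem3}): $E[\exp(\lambda W^2)]\le(1-4\eta\lambda)^{-1/2}$ and hence $\varphi(\lambda):=\log E[\exp(\lambda(W^2-1))]\le 8\eta^2\lambda^2/(1-4\eta\lambda)$ for $\eta\le\tfrac12$. It then takes the product over $j$, bounds each $\varphi(\lambda\mu_j)$ uniformly by its value at $\lambda\|\mu\|_\infty$, and sets $\lambda=\theta\epsilon/(2\eta\tau\log^{2\delta}n)$. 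The factor $\log^{2\delta}n$ arises from the \emph{square} $\|\mu\|_\infty^2\le\tau^2\log^{2\delta}n$ in the numerator of $\varphi$, not from an accumulation over $d$ coordinatewise integrations as you suggest. For the lower tail the paper repeats the Taylor computation for $\phi(\lambda)=\log E[\exp(\lambda(1-W^2))]$ and shows that the extra hypothesis $\eta\ge\tfrac12(1-\beta^2)/(1+\beta^2)$ is exactly what forces the residual alternating terms to be nonpositive.

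One point worth flagging: the paper's product step asserts independence of the $b_j$, which is automatic for Gaussian $a$ by rotation invariance but is not guaranteed for a general subgaussian vector. Your heavier quadratic-form/decoupling machinery would in principle sidestep that issue, but it is not how the paper argues, and it is also not needed to reproduce the stated exponent.
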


Again, the following corollary follows by a union bound over $C^2_n$  pairs of points and setting $\tau=2$.

\begin{corollary}\label{cor2}
Let $x^1,x^2,\cdots, x^n$ be $n$ points in the $d$-dimensional
Euclidean space $\mathbb{R}^d$. Let $\epsilon \in (0,\frac{1}{2})$
and let $k=O(\epsilon^{-2}\log^{1+2\delta}n)$ be a natural number, where $\delta$ is a fixed positive parameter. Assume that
$f$ is a composition of a $k\times d$ subgaussian circulant matrix
$M_{a,k}$ with a $d\times d$ random diagonal matrix
$D_{\varkappa}$, i.e., $f(x)=\frac{1}{\sqrt{k}}M_{a,k}
D_{\varkappa}x$. Assume that the subgaussian constant $\eta$ obeys $\frac{1}{2}\frac{1-\beta^2}{1+\beta^2}\leq \eta \leq \frac{1}{2}$ where $\beta=\frac{\theta\epsilon}{\log^{2\delta}n}<1$, $0<\theta <\min\{1,\frac{1}{16\eta}\}$, and $n$ is big enough such that $\frac{2\theta\epsilon}{\log^\delta n}<\frac{1}{2}$. Then with probability at least $\frac{2}{3}\left(1-(d+k)e^{-\log^\delta (n)}\right)$ the following
holds
\begin{equation}
(1-\epsilon)\|x^i-x^j\|_2^2\leq \|f(x^i)-f(x^j)\|_2^2 \leq
(1+\epsilon)\|x^i-x^j\|_2^2,~~i,j,=1,\cdots, n.
\end{equation}
\end{corollary}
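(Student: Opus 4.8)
\textbf{Proof proposal for Corollary \ref{cor2}.}
The plan is to obtain Corollary \ref{cor2} from Theorem \ref{lem01} by the standard Johnson--Lindenstrauss union-bound argument, carried out conditionally on the ``good'' event for the circulant generator $a$. First I would fix $\tau=2$ everywhere, so that $1-(d+k)e^{-\tau\log^\delta n/2}=1-(d+k)e^{-\log^\delta n}$ and the side conditions of Theorem \ref{lem01} collapse to those displayed in the corollary (in particular $\theta<\min\{1,\tfrac{1}{16\eta}\}$, the requirement $\tfrac{2\theta\epsilon}{\log^\delta n}<\tfrac12$, and the $\eta$-window governed by $\beta$, after rescaling $\beta$ by the factor $\tau=2$). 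Let $E_a$ be the event (depending only on $a$) on which the two tail bounds (\ref{inequ01})--(\ref{inequ02}) hold for every unit vector; Theorem \ref{lem01} gives $\mathbb{P}(E_a)\ge 1-(d+k)e^{-\log^\delta n}$.

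Next, for each pair $1\le i<j\le n$ with $x^i\neq x^j$ put $u^{ij}=(x^i-x^j)/\|x^i-x^j\|_2$. Since $f(x)=\tfrac{1}{\sqrt k}M_{a,k}D_\varkappa x$ is linear, $\|f(x^i)-f(x^j)\|_2^2=\|x^i-x^j\|_2^2\cdot\tfrac1k\|M_{a,k}D_\varkappa u^{ij}\|_2^2$, so the desired distortion bound for the pair $(i,j)$ is exactly $(1-\epsilon)k\le\|M_{a,k}D_\varkappa u^{ij}\|_2^2\le(1+\epsilon)k$. Conditioning on $E_a$ and applying (\ref{inequ01}) and (\ref{inequ02}) to the unit vector $u^{ij}$, the probability over $\varkappa$ that this fails is at most $2\exp\!\big(-c(\theta,\eta,\tau)k\epsilon^2/\log^{2\delta}n\big)$; pairs with $x^i=x^j$ are trivial since both sides vanish.

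Then I would take a union bound over the at most $\binom{n}{2}<n^2$ pairs: conditional on $E_a$,
\[
\mathbb{P}\Big(\exists\,(i,j)\ \text{with distortion}>\epsilon\ \Big|\ E_a\Big)\ \le\ 2n^2\exp\left(-\frac{c(\theta,\eta,\tau)k\epsilon^2}{\log^{2\delta}n}\right).
\]
Taking $k=C\epsilon^{-2}\log^{1+2\delta}n$ makes the exponent equal to $-c(\theta,\eta,\tau)C\log n$, so the right-hand side is $2n^{2-c(\theta,\eta,\tau)C}$, which is $\le\tfrac13$ once $C$ is a sufficiently large constant depending only on the fixed parameters $\theta,\eta$; this is precisely what $k=O(\epsilon^{-2}\log^{1+2\delta}n)$ encodes. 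Hence $\mathbb{P}(\text{all pairs embed well}\mid E_a)\ge\tfrac23$, and multiplying by $\mathbb{P}(E_a)$ yields overall success probability at least $\tfrac23\big(1-(d+k)e^{-\log^\delta n}\big)$, as claimed.

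There is no genuine obstacle here --- the content is entirely in Theorem \ref{lem01}. The only points needing a little care are bookkeeping: verifying that with $\tau=2$ the hypotheses on $\theta$, $\eta$ and on the size of $n$ transcribe correctly (including the rescaling of $\beta$), and being explicit that the absolute constant $C$ hidden in the $O(\cdot)$ is chosen \emph{after}, and is permitted to depend on, the fixed parameters $\theta,\eta,\delta$.
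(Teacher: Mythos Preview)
Your proposal is essentially the paper's own argument: the paper proves Corollary~\ref{cor2} in one line, ``the following corollary follows by a union bound over $C_n^2$ pairs of points and setting $\tau=2$,'' and you have spelled out exactly that. One minor correction in your write-up: the ``good'' event with probability at least $1-(d+k)e^{-\log^\delta n}$ is the spectral bound $\|Y\|^2\le\tau\log^\delta n$ from Step~2, which is an event in the $\varkappa$-randomness (for a given unit vector), not in $a$; the inner tail bounds (\ref{inequ01})--(\ref{inequ02}) are then over $a$ --- so your $E_a$ should be $E_\varkappa$, but the rest of the bookkeeping goes through unchanged.
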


\begin{remark}
The bound $k=O(\epsilon^{-2}\log^{1+2\delta}n)$ is independent of the parameters $\eta, \beta$ and $\theta$. These parameters are used to bound the subgaussian constant. In other words, the conclusion in Corollary \ref{cor2} only applies to some special subgaussian cases.
\end{remark}

\begin{remark}
Although our main result can be extended to subgaussian case, we have to admit that the bound $k=O(\epsilon^{-2}\log^{1+2\delta}(n))$ in Corollary \ref{cor2} is weaker than (\ref{best}) due to the factor $\log^{2\delta}(n)$ and the implicit requirement $\log(d)<\log^\delta (n)$ in the probability bound. However, our analysis is more direct than that in \cite{KMR} and our bound is comparable to (\ref{best}) when the number of points $n$ is approximately the same as the ambient dimension $d$.
\end{remark}

\section{Proof of Theorem \ref{lem0}}
In this section, we will prove Theorem \ref{lem0} by showing that for any fixed unit vector $x$, $f(x)= M_{a,k}
D_{\varkappa}x$ has the concentration property. We divide the proof of Theorem \ref{lem0} into three steps. Since the random matrix $M_{a,k}
D_{\varkappa}$ couples the random vectors $a$ and $\varkappa$ together, the first step decouples these two random vectors so that we can apply some existing concentration results to them separately. The second step estimates the spectral norm of random matrix $Y$ whose randomness is from the Bernoulli random vector $\varkappa$. By using the special structure of the random matrix $Y$, we deduce a tighter and more general estimate than that from \cite{Vyb}. Our derivation relies on the matrix-valued Berstein inequality in \cite{Tro1}. The last step is a direct application of the concentration of quadratic function to the Gaussian random vector $a$.

\textbf{Step 1: Decoupling.}
We define matrix
$$
Y=\left(\begin{array}{ccccc}
x_0\varkappa_0 & x_1\varkappa_1 & x_2\varkappa_2 & \cdots & x_{d-1}\varkappa_{d-1}\\
x_1\varkappa_1 & x_2\varkappa_2 & x_3\varkappa_3 & \cdots & x_0\varkappa_0\\
\vdots & \vdots & \vdots & \ddots & \vdots\\
x_{k-1}\varkappa_{k-1} & x_{k}\varkappa_k& x_{k+1}\varkappa_{k+1} & \cdots &
x_{k-2}\varkappa_{k-2}\end{array}\right)\in \mathbb{R}^{k \times d}.
$$
Then it holds
$$\|f(x)\|_2^2=\|M_{a,k}D_{\varkappa}x\|^2_2=\|Ya\|^2_2.$$
Let $Y=U\Sigma V^T$ be the singular value decomposition of $Y$. Since $Y\in \mathbb{R}^{k\times d}$, we take matrices $U\in \mathbb{R}^{k\times k} , V\in \mathbb{R}^{d\times k}$ to be real orthogonal matrices \cite{Horn}.  Thus $b=V^Ta$ is a $k-$dimensional vector of independent
Gaussian variable. Hence,
$$\|Ya\|^2_2=\|U\Sigma V^Ta\|^2_2=\|U\Sigma b\|^2_2=\|\Sigma b\|^2_2=\sum_{j=0}^{k-1}|\lambda_j|^2b_j^2,$$ where $\lambda_j, j=0,1,\cdots, k-1$
are the singular values of $Y$, and $b_j= \sum_{i=0}^{d-1}V_{ij}a_i$. Let $\mu_j=|\lambda_j|^2$. Then
\begin{equation}\label{equ1}
\|\mu\|_1=\sum_{j=0}^{k-1}|u_j|=\sum_{j=0}^{k-1}|\lambda_j|^2=\|Y\|_F^2=k,
\end{equation}
where $\|Y\|_F$ is the Frobenius norm of $Y$, and the last identity is due to that $x \in \mathbb{R}^d$ is a unit vector.

\textbf{Step 2: Spectral estimate.}
While the analysis of decoupling process in the first step closely follows from Vyb\'{\i}ral \cite{Vyb}, the estimate of the spectral norm of $Y$ is quite different.  We begin with the following
lemma \cite{Tro1}:

%We have to admit that the arguments above are closely following from that of Vyb\'{\i}ral \cite{Vyb}.
%But now we start out with our own proposed method which is quite different from the discrete Fourier transform method by Vyb\'{\i}ral.

\begin{lemma}[Matrix-valued Bernstein inequality]\label{lem1}
Consider a finite sequence $\{B_i\}$ of fixed matrices with
dimension $d_1\times d_2$, and let $\{\xi_i\}$ be a finite sequence
of independent standard normal variables or symmetrical Bernoulli
variables. Then, for all $t\geq 0$,
\begin{equation}\label{equ2}
\mathbb{P} \{\|\sum_i\xi_iB_i\|\geq t \}\leq
(d_1+d_2)e^{-t^2/2\sigma^2},
\end{equation}
where $$\sigma^2:=\max\{\|\sum_iB_iB^T_i\|,\|\sum_iB^T_iB_i\|\}.$$

\end{lemma}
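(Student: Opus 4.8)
The plan is to apply the matrix Laplace transform method in the form systematized by Tropp. The first step is to pass from the rectangular matrix $Z:=\sum_i\xi_iB_i$ to its Hermitian dilation
\[
\mathcal{D}(Z)=\begin{pmatrix}0 & Z\\ Z^{T} & 0\end{pmatrix}\in\mathbb{R}^{(d_1+d_2)\times(d_1+d_2)},
\]
which is symmetric, satisfies $\|\mathcal{D}(Z)\|=\|Z\|=\lambda_{\max}(\mathcal{D}(Z))$ (its nonzero eigenvalues are the singular values of $Z$ with both signs), and, since $\mathcal{D}$ is linear, obeys $\mathcal{D}(Z)=\sum_i\xi_i\mathcal{D}(B_i)$. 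Because $\mathcal{D}(B_i)^2=\mathrm{diag}(B_iB_i^{T},B_i^{T}B_i)$, we get $\sum_i\mathcal{D}(B_i)^2=\mathrm{diag}\big(\sum_iB_iB_i^{T},\ \sum_iB_i^{T}B_i\big)$, whose operator norm is exactly $\sigma^{2}$. Hence it suffices to prove the Hermitian version: for fixed symmetric matrices $\{A_i\}$ of dimension $d$ and $v:=\|\sum_iA_i^{2}\|$,
\[
\mathbb{P}\{\lambda_{\max}(S)\geq t\}\leq d\,e^{-t^{2}/(2v)},\qquad S:=\textstyle\sum_i\xi_iA_i,
\]
and then substitute $d\mapsto d_1+d_2$ and $v\mapsto\sigma^{2}$.

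For the Hermitian version I would use the matrix exponential Markov bound: since $e^{\lambda_{\max}(\theta S)}=\lambda_{\max}(e^{\theta S})\leq\mathrm{tr}\,e^{\theta S}$, for every $\theta>0$
\[
\mathbb{P}\{\lambda_{\max}(S)\geq t\}\leq e^{-\theta t}\,\mathbb{E}\,\mathrm{tr}\exp(\theta S).
\]
The decisive step is to bound $\mathbb{E}\,\mathrm{tr}\exp(\theta S)$, and for this I would invoke the subadditivity of matrix cumulant generating functions --- a consequence of Lieb's concavity theorem (concavity of $X\mapsto\mathrm{tr}\exp(H+\log X)$ on positive-definite $X$) together with Jensen's inequality --- which allows one to strip off the independent summands one at a time:
\[
\mathbb{E}\,\mathrm{tr}\exp(\theta S)\leq\mathrm{tr}\exp\Big(\textstyle\sum_i\log\mathbb{E}\exp(\theta\xi_iA_i)\Big).
\]
This reduction via Lieb's theorem is where the real work lies; everything else is comparatively elementary.

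Finally I would control the matrix moment generating function of each term. For symmetric Bernoulli $\xi_i$ one has $\mathbb{E}\exp(\theta\xi_iA_i)=\cosh(\theta A_i)$, and the scalar bound $\cosh(s)\leq e^{s^{2}/2}$ lifts through the spectral calculus to $\cosh(\theta A_i)\preceq\exp(\theta^{2}A_i^{2}/2)$; for standard normal $\xi_i$ one has the exact identity $\mathbb{E}\exp(\theta\xi_iA_i)=\exp(\theta^{2}A_i^{2}/2)$. In either case $\log\mathbb{E}\exp(\theta\xi_iA_i)\preceq\tfrac{\theta^{2}}{2}A_i^{2}$, so using monotonicity of the trace exponential under the semidefinite order and $\mathrm{tr}\exp(M)\leq d\,e^{\lambda_{\max}(M)}$,
\[
\mathbb{E}\,\mathrm{tr}\exp(\theta S)\leq\mathrm{tr}\exp\Big(\tfrac{\theta^{2}}{2}\textstyle\sum_iA_i^{2}\Big)\leq d\,e^{\theta^{2}v/2}.
\]
Combining with the exponential Markov bound gives $\mathbb{P}\{\lambda_{\max}(S)\geq t\}\leq d\,e^{-\theta t+\theta^{2}v/2}$ for all $\theta>0$; choosing $\theta=t/v$ yields $d\,e^{-t^{2}/(2v)}$, and reinstating $d_1+d_2$ and $\sigma^{2}$ gives (\ref{equ2}). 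I expect the only genuine obstacle to be the subadditivity of matrix cumulant generating functions, i.e.\ the appeal to Lieb's concavity theorem; the dilation reduction and the per-term MGF estimates are routine.
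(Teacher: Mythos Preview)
The paper does not prove this lemma at all; it simply quotes it from Tropp~\cite{Tro1} and uses it as a black box. Your proposal is correct and is exactly the argument given in that reference (Hermitian dilation to reduce to the symmetric case, the matrix Laplace transform bound, subadditivity of matrix cumulants via Lieb's concavity theorem, the per-term MGF bounds $\cosh(\theta A_i)\preceq e^{\theta^2 A_i^2/2}$ for Rademacher and the exact Gaussian identity, and optimization over $\theta$), so there is nothing to compare.
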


To apply this lemma to our case, we define two $d\times d$
permutation matrices

~~~~~~~~~~~~$$ P=\left(\begin{array}{cccccc}
0 & 0 & 0 & \cdots & 0  & 1\\
1 & 0 & 0 & \cdots & 0  & 0\\
0 & 1 & 0 & \cdots & 0  & 0\\
\vdots & \vdots & \vdots &  \ddots & \vdots  & \vdots\\
0 & 0 & 0 & \cdots & 1  & 0 \end{array}\right)\in \mathbb{R}^{d \times d}~~~~and~~~~
C=\left(\begin{array}{cccccc}
1 & 0 & 0 & \cdots & 0  & 0\\
0 & 0 & 0 & \cdots & 0  & 1\\
0 & 0 & 0 & \cdots & 1  & 0\\
\vdots & \vdots & \vdots &  \ddots & \vdots  & \vdots\\
0 & 1 & 0 & \cdots & 0  & 0 \end{array}\right)\in \mathbb{R}^{d \times d}. $$

Let $\mathrm{S}=(I_k~~0_{k\times (d-k)})$.  By multiplying matrix $S$ at
the left-hand side of an arbitrary matrix, one obtains its first $k$ rows as a new
rectangular matrix with dimension $k\times d$; thus the matrix
$Y$ can be written in the form
\begin{equation}
Y=
\sum_{i=0}^{d-1}\varkappa_ix_i\mathrm{S}P^iC\triangleq
\sum_{i=0}^{d-1}\varkappa_iB_i,
\end{equation}
where the random matrix $B_i=x_i\mathrm{S}P^iC$.
Now, we estimate the spectral norm of random matrix $Y$ by using Lemma \ref{lem1}.
\begin{lemma}\label{lem2}
Let $\mathrm{Y}$ be defined as before. Then it holds
\begin{equation}\label{equ3}
\mathbb{P} \{\|\mathrm{Y}\|\geq t \}\leq (d+k)e^{-t^2/2}.
\end{equation}
\end{lemma}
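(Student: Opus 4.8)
The plan is to apply the matrix-valued Bernstein inequality of Lemma~\ref{lem1} directly to the representation $Y = \sum_{i=0}^{d-1}\varkappa_i B_i$ with $B_i = x_i\,\mathrm{S} P^i C$ established above. Since $\{\varkappa_i\}$ is a symmetric Bernoulli sequence, the inequality (\ref{equ2}) applies with $d_1 = k$ and $d_2 = d$, so it suffices to show that the variance proxy $\sigma^2 := \max\{\|\sum_i B_i B_i^T\|,\ \|\sum_i B_i^T B_i\|\}$ satisfies $\sigma^2 \le 1$; substituting this into (\ref{equ2}) then yields exactly (\ref{equ3}).

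First I would evaluate $\sum_i B_i B_i^T$. The matrices $P$ and $C$ are permutation matrices, hence orthogonal, so $P^i(P^i)^T = I_d$ and $CC^T = I_d$; moreover $\mathrm{S} = (I_k\ \ 0_{k\times(d-k)})$ has orthonormal rows, i.e.\ $\mathrm{S}\mathrm{S}^T = I_k$. Therefore $B_iB_i^T = x_i^2\, \mathrm{S} P^i C C^T (P^i)^T \mathrm{S}^T = x_i^2 I_k$, and summing over $i$ gives $\sum_i B_i B_i^T = \big(\sum_i x_i^2\big) I_k = \|x\|_2^2 I_k = I_k$ because $x$ is a unit vector; in particular $\|\sum_i B_i B_i^T\| = 1$.

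Next I would analyze $\sum_i B_i^T B_i$. Now $\mathrm{S}^T \mathrm{S} = \mathrm{diag}(1,\dots,1,0,\dots,0)$ is the coordinate projection onto the first $k$ coordinates, so $B_i^T B_i = x_i^2\, C^T (P^i)^T (\mathrm{S}^T \mathrm{S}) P^i C = x_i^2 D_i$, where $D_i := C^T (P^i)^T (\mathrm{S}^T \mathrm{S}) P^i C$ is again a diagonal $0/1$ matrix (conjugating a diagonal $0/1$ matrix by the permutations $P^i$ and $C$ keeps it diagonal and preserves the number of ones), namely one with exactly $k$ ones on its diagonal. Hence $\sum_i B_i^T B_i = \sum_i x_i^2 D_i$ is diagonal, and its $j$-th diagonal entry equals $\sum_{\,i:\,(D_i)_{jj}=1} x_i^2 \le \sum_i x_i^2 = 1$, so $\|\sum_i B_i^T B_i\| \le 1$. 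Combining the two bounds gives $\sigma^2 = 1$, and Lemma~\ref{lem1} finishes the proof.

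The computations are short; the only point requiring care is keeping track of which orthogonality relation is in force at each step ($\mathrm{S}\mathrm{S}^T = I_k$ on the one hand, $\mathrm{S}^T \mathrm{S}$ a coordinate projection on the other) and recognizing that $B_i^T B_i$ is a scaled coordinate projection, so that the diagonal entries of $\sum_i B_i^T B_i$ are convex combinations of the $x_i^2$ and hence bounded by $\|x\|_2^2 = 1$. Everything else is just the unit-norm normalization of $x$ together with a single invocation of the matrix Bernstein bound.
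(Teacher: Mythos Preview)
Your proof is correct and follows essentially the same route as the paper: apply Lemma~\ref{lem1} to $Y=\sum_i\varkappa_iB_i$, compute $\sum_iB_iB_i^T=I_k$ via $\mathrm{S}\mathrm{S}^T=I_k$ and orthogonality of $P,C$, and bound $\|\sum_iB_i^TB_i\|\le1$. The only cosmetic difference is in the second bound: the paper uses the semidefinite inequality $\mathrm{S}^T\mathrm{S}\preceq I_d$ (preserved under conjugation by permutations) to get $\sum_ix_i^2D_i\preceq I_d$, whereas you argue directly on the diagonal entries; both are equivalent here. One wording nit: in your final paragraph the diagonal entries of $\sum_iB_i^TB_i$ are partial sums of the $x_i^2$, not convex combinations, but your displayed inequality $\sum_{i:(D_i)_{jj}=1}x_i^2\le\|x\|_2^2=1$ is the correct statement.
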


\begin{proof}
By Lemma \ref{lem1}, we only need to show that
$$\max\{\|\sum_{i=0}^{d-1}B_iB^T_i\|,\|\sum_{i=0}^{d-1}B^T_iB_i\|\}=1,$$
where $B_i=x_i\mathrm{S}P^iC$. In fact, on one
hand,
\begin{equation}
\sum_{i=0}^{d-1}B_iB^T_i=\sum_{i=0}^{d-1} x_i^2
\mathrm{S}\mathrm{P}^i\mathrm{C}\mathrm{C}^T(\mathrm{P}^i)^T\mathrm{S}^T=\sum_{i=0}^{d-1}
x_i^2 I_k,
\end{equation}
where we have employed the property $Q^T=Q^{-1}$ for
every permutation $Q$, $\mathrm{S}\mathrm{S}^T=I_k$, $\mathrm{C}\mathrm{C}^T=I_d$, and $\mathrm{P}^i(\mathrm{P}^i)^T=I_d$. Since vector
$x$ is a unit vector, we get $\sum_{i=0}^{d-1}B_iB^T_i=I_k$ which implies
$\|\sum_{i=0}^{d-1}B_iB^T_i\|=1$. On the other hand,

\begin{subequations}
\begin{align}
\sum_{i=0}^{d-1}B^T_iB_i &= \sum_{i=0}^{d-1}x_i^2\mathrm{C}^T(\mathrm{P}^i)^T\mathrm{S}^T\mathrm{S}\mathrm{P}^i\mathrm{C} \\
               &= \sum_{i=0}^{d-1}x_i^2\mathrm{C}^T(\mathrm{P}^i)^T
\left(\begin{array}{cc}
I_k & 0 \\
0  & 0 \end{array}\right)
               \mathrm{P}^i\mathrm{C} \\
               &\preceq  \sum_{i=0}^{d-1}x_i^2 I_d= I_d.
\end{align}
\end{subequations}
Thus, $\|\sum_{i=0}^{d-1}B^T_iB_i\|\leq 1$. This completes the proof.
\end{proof}

Taking $t=\sqrt{\tau}\log^{\delta/2} n$ with $\delta, \tau$ being positive
parameters in the probability inequality (\ref{equ3}), we have the
following estimation
\begin{equation} \label{equ4}
\|\mu\|_\infty=\|\lambda\|^2_\infty=\|\mathbf{Y}\|^2\leq
\tau\log^\delta n
\end{equation}
with probability at least $1-(d+k)e^{-\frac{\tau\log^\delta n}{2}}$.
From (\ref{equ1}) and (\ref{equ4}), the following holds
\begin{equation} \label{equ5}
\|\mu\|_2\leq \sqrt{\|\mu\|_1\|\mu\|_\infty}\leq \sqrt{\tau k\log^\delta n}
\end{equation}
with probability at least $1-(d+k)e^{-\frac{\tau\log^\delta n}{2}}$.

\textbf{Step 3: Concentration.}
To finish the proof, we need the following concentration result \cite{Laurent}, which is also the main tool employed in \cite{Hinrichs,Vyb}.
\begin{lemma}\label{lem}
Let $Z=\sum_{i=1}^s \alpha_i(a_i^2-1)$ where $a_i$ are independent identically distributed (i.i.d.) normal
variables and $\alpha_i$ are nonnegative numbers. Then for any
$t>0$,
\begin{equation}\label{equ6}
\mathbb{P}(Z\geq 2\|\alpha\|_2\sqrt{t}+2\|\alpha\|_\infty t)\leq \exp(-t),
\end{equation}
\begin{equation}\label{equ7}
\mathbb{P}(Z\leq -2\|\alpha\|_2 \sqrt{t})\leq \exp(-t).
\end{equation}
\end{lemma}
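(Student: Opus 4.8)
The plan is to prove this classical weighted chi-square concentration inequality (essentially Lemma~1 of \cite{Laurent}) by the Laplace transform (Chernoff) method. The starting point is the moment generating function of a chi-square variable with one degree of freedom: for a standard normal $a$ and $\lambda<1/2$ one has $\mathbb{E}[e^{\lambda(a^2-1)}]=e^{-\lambda}(1-2\lambda)^{-1/2}$, so by independence the cumulant generating function of $Z$ factorizes as $\psi_Z(\lambda):=\log\mathbb{E}[e^{\lambda Z}]=\sum_{i=1}^{s}\psi_i(\lambda)$ with $\psi_i(\lambda)=-\lambda\alpha_i-\tfrac12\log(1-2\lambda\alpha_i)$, valid as long as $\lambda\alpha_i<1/2$ for every $i$. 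Everything then reduces to bounding $\psi_Z$ on the appropriate range of $\lambda$ and applying Markov's inequality.

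For the upper tail \eqref{equ6} I would restrict to $0<\lambda<1/(2\|\alpha\|_\infty)$ and use the elementary inequality $-u-\tfrac12\log(1-2u)\le \tfrac{u^2}{1-2u}$ for $u\in[0,1/2)$, which follows from checking that the difference vanishes at $u=0$ and has derivative $-2u^2/(1-2u)^2\le 0$. Applying this with $u=\lambda\alpha_i$ and bounding $1-2\lambda\alpha_i\ge 1-2\lambda\|\alpha\|_\infty$ in the denominator, summation gives $\psi_Z(\lambda)\le \tfrac{\lambda^2\|\alpha\|_2^2}{1-2\lambda\|\alpha\|_\infty}$; that is, $Z$ is sub-gamma with variance factor $\nu=2\|\alpha\|_2^2$ and scale $c=2\|\alpha\|_\infty$. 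Markov's inequality yields $\mathbb{P}(Z\ge u)\le \exp(-\psi_Z^{\ast}(u))$ with $\psi_Z^{\ast}(u)=\sup_{0<\lambda<1/c}\bigl(\lambda u-\tfrac{\lambda^2\nu/2}{1-c\lambda}\bigr)$; computing this Legendre transform explicitly and inverting the map $x\mapsto 1+x-\sqrt{1+2x}$ shows $\psi_Z^{\ast}(u)\ge t$ whenever $u\ge\sqrt{2\nu t}+ct=2\|\alpha\|_2\sqrt t+2\|\alpha\|_\infty t$, which is exactly \eqref{equ6}.

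For the lower tail \eqref{equ7} I would take $\lambda<0$, where $1-2\lambda\alpha_i\ge 1$ so no range restriction is needed, and use the sharper elementary bound $-u-\tfrac12\log(1-2u)\le u^2$ for $u\le 0$ (equivalently $\tfrac w2-\tfrac12\log(1+w)\le\tfrac{w^2}{4}$ for $w\ge0$, which comes from $\log(1+w)\ge w-\tfrac{w^2}{2}$). This gives $\psi_Z(\lambda)\le\lambda^2\|\alpha\|_2^2$, and the Chernoff bound $\mathbb{P}(Z\le-u)\le\exp(-\mu u+\mu^2\|\alpha\|_2^2)$ with $\mu=-\lambda>0$, optimized at $\mu=u/(2\|\alpha\|_2^2)$, gives $\mathbb{P}(Z\le-u)\le\exp\bigl(-u^2/(4\|\alpha\|_2^2)\bigr)$; taking $u=2\|\alpha\|_2\sqrt t$ proves \eqref{equ7}.

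The main obstacle here is modest and purely computational, since the statement is classical: one must establish the two logarithmic inequalities with precisely the denominators that let the per-coordinate bounds telescope into the clean sub-gamma form, and then carry out the sub-gamma bookkeeping that converts $\psi_Z^{\ast}(u)\ge t$ into the explicit deviation $2\|\alpha\|_2\sqrt t+2\|\alpha\|_\infty t$ (the sub-additivity $\sqrt{a+b}\le\sqrt a+\sqrt b$ is what makes the resulting quadratic in $u$ come out cleanly). Since \cite{Laurent} is cited, an acceptable alternative is simply to quote Lemma~1 there; the argument sketched above is the self-contained version.
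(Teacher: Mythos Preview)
The paper does not actually prove this lemma at all; it simply quotes the result from Laurent and Massart \cite{Laurent} and uses it as a black box in Step~3. Your proposal goes further by supplying a complete self-contained proof via the Chernoff method, which is precisely the argument of Lemma~1 in \cite{Laurent}. The two logarithmic inequalities you invoke are correct (your derivative computation for the upper-tail inequality checks out, and $\log(1+w)\ge w-w^2/2$ for $w\ge0$ handles the lower tail), and the sub-gamma bookkeeping is standard. So your proposal is correct and strictly more informative than the paper, which only cites the reference; as you yourself note at the end, simply quoting \cite{Laurent} would also have sufficed.
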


Now, let us complete the proof. First, we have
\begin{equation}
\mathbb{P}(\|Ya\|^2_2\geq (1+\epsilon)k) =\mathbb{P}(\sum_{j=0}^{k-1}\mu_j (b_j^2-1)\geq \epsilon k).
\end{equation}
Denote $Z=\sum_{j=0}^{k-1}\mu_j (b_j^2-1)$; then we need to estimate $\mathbb{P}(Z \geq k\epsilon)$.  By the estimation (\ref{equ6}) in Lemma \ref{lem}, we get
\begin{equation}\label{equ8}
\mathbb{P}(Z\geq 2\|\mu\|_2\sqrt{t}+2\|\mu\|_\infty t)\leq \exp(-t).
\end{equation}
Using (\ref{equ4}) and (\ref{equ5}), we derive
\begin{equation}
\mathbb{P}(Z\geq 2\sqrt{\tau kt\log^\delta n}+2\tau t\log^\delta n )\leq \exp(-t).
\end{equation}
Setting $t=\frac{c(\tau)k\epsilon^2}{\log^{\delta} n}$ with $c(\tau)=\frac{1}{8\tau}$,
we have
\begin{equation}
2\sqrt{\tau kt\log^\delta n}+2\tau t\log^\delta n = (\frac{\sqrt{2}}{2}+\frac{\epsilon}{4}) k\epsilon\leq k\epsilon.
\end{equation}
 Thus, we finally get
\begin{equation}
\mathbb{P}(Z\geq k\epsilon)\leq \exp(-\frac{c(\tau)k\epsilon^2}{\log^{\delta} n}),
\end{equation}
which shows (\ref{inequ1}). The inequality (\ref{inequ2}) can be proved in the same manner by invoking the estimation (\ref{equ7}) in Lemma \ref{lem}.

\section{Proof of Theorem \ref{lem01}}
For the subgaussian case, we provide a direct proof by using the Laplace transform technique. First, we need the following lemma:
\begin{lemma}\label{lem3}
If $X$ is subgaussian with constant $\eta >0$ and $X_i\sim X$ are i.i.d., then
\begin{equation}\label{equ9}
E[\exp(\lambda W^2)]\leq \frac{1}{\sqrt{1-4\eta\lambda}},
\end{equation} where
$W=\sum^{k-1}_{i=0}X_i\beta_i$ with $\beta_i\in \mathbb{R}$ satisfying $\sum^{k-1}_{i=0}\beta_i^2=1$. Moreover, define
$\varphi(\lambda)=\log E[\exp(\lambda (W^2-1))]$. Then it holds
\begin{equation}\label{equ10}
\varphi(\lambda)\leq \frac{8\eta^2\lambda^2}{1-4\eta \lambda}, ~~for~~\lambda<\frac{1}{4\eta}~~and~~\eta \leq \frac{1}{2}.
\end{equation}
\end{lemma}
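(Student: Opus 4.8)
The plan is to establish (\ref{equ9}) by the classical Gaussian linearization trick and then to extract (\ref{equ10}) from it by a short one-variable estimate.

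\emph{Step 1: proof of (\ref{equ9}).} I would introduce an auxiliary standard normal variable $g\sim N(0,1)$, independent of $X_0,\dots,X_{k-1}$, and use the moment generating identity $E_g[\exp(sg)]=\exp(s^2/2)$ with $s=\sqrt{2\lambda}\,W$ (legitimate since $\lambda\ge 0$) to obtain the pointwise representation $\exp(\lambda W^2)=E_g[\exp(\sqrt{2\lambda}\,gW)]$. Taking the expectation over the $X_i$, interchanging the two expectations (the integrand is nonnegative, so Tonelli applies), and using the independence of the $X_i$ together with $W=\sum_i\beta_iX_i$, one gets
$$E[\exp(\lambda W^2)]=E_g\Bigl[\prod_{i=0}^{k-1}E\bigl[\exp(\sqrt{2\lambda}\,g\beta_iX_i)\bigr]\Bigr]\le E_g\Bigl[\prod_{i=0}^{k-1}\exp(2\eta\lambda g^2\beta_i^2)\Bigr]=E_g\bigl[\exp(2\eta\lambda g^2)\bigr],$$
where the inequality applies the subgaussian bound $E[\exp(tX)]\le\exp(\eta t^2)$ with $t=\sqrt{2\lambda}\,g\beta_i$ and the last equality uses $\sum_i\beta_i^2=1$. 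The elementary Gaussian integral $E_g[\exp(sg^2)]=(1-2s)^{-1/2}$, valid for $s<1/2$, applied with $s=2\eta\lambda$ then yields (\ref{equ9}) and simultaneously fixes the admissible range $\lambda<\frac{1}{4\eta}$.

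\emph{Step 2: proof of (\ref{equ10}).} Since $\varphi(\lambda)=-\lambda+\log E[\exp(\lambda W^2)]$, inequality (\ref{equ9}) gives immediately $\varphi(\lambda)\le-\lambda-\frac{1}{2}\log(1-4\eta\lambda)$. Setting $t:=4\eta\lambda\in[0,1)$ and substituting $\lambda=t/(4\eta)$, the desired bound $\varphi(\lambda)\le\frac{8\eta^2\lambda^2}{1-4\eta\lambda}$ is equivalent to $-\frac{t}{2\eta}-\log(1-t)\le\frac{t^2}{1-t}$. This is exactly where the hypothesis $\eta\le\frac{1}{2}$ enters: it makes $\frac{1}{2\eta}\ge1$, so it suffices to prove the parameter-free inequality $-t-\log(1-t)\le\frac{t^2}{1-t}$ on $[0,1)$, which follows from the power series $-t-\log(1-t)=\sum_{j\ge2}\frac{t^j}{j}\le\sum_{j\ge2}t^j=\frac{t^2}{1-t}$.

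The only step requiring a little care is the linearization in Step 1 — namely, justifying the interchange of the $g$-expectation with the $X$-expectation and checking that the Gaussian integrals converge precisely on $\lambda<\frac{1}{4\eta}$; once this is in place, the remainder is bookkeeping of constants and the routine scalar inequality of Step 2, so I do not expect any real obstacle.
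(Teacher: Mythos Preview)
Your proposal is correct and follows essentially the same approach as the paper: for the second part, both you and the paper start from $\varphi(\lambda)\le-\lambda-\tfrac{1}{2}\log(1-4\eta\lambda)$, use $\eta\le\tfrac{1}{2}$ to drop the linear term, and bound the remaining Taylor series $\sum_{m\ge2}\tfrac{(4\eta\lambda)^m}{m}$ by the geometric series $\sum_{m\ge2}(4\eta\lambda)^m$. The only difference is that for the first part the paper simply cites a reference, while you supply the standard Gaussian-linearization argument; your version is thus self-contained but otherwise aligned with the paper.
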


\begin{proof}
For the proof of the first part, see \cite{Lee}. Here we only show the second part. Using the estimate of bound $E[\exp(\lambda W^2)]$ in (\ref{equ9}) and the conditions $\lambda<\frac{1}{4\eta}$ and $\eta \leq \frac{1}{2}$, we calculate
\begin{subequations}
\begin{align}
\varphi(\lambda) &\leq  -\frac{1}{2}\log(1-4\eta\lambda)-\lambda \\
                 &=     2\eta\lambda-\lambda+\sum_{m=2}^{\infty}2^{m-1}\frac{(2\eta\lambda)^m}{m} \\
                 &\leq \sum_{m=2}^{\infty}2^{m-1}\frac{(2\eta\lambda)^m}{m}=\sum_{m=2}^{\infty} \frac{8\eta^2\lambda^2(4\eta\lambda)^{m-2}}{m}\label{sum1}\\
                 &\leq  \sum_{m=2}^{\infty}8\eta^2\lambda^2(4\eta\lambda)^{m-2} = \sum_{m=0}^{\infty}8\eta^2\lambda^2(4\eta\lambda)^{m}\\
                 &=\frac{8\eta^2\lambda^2}{1-4\eta \lambda}, ~~for~~\lambda<\frac{1}{4\eta}~~and ~~\eta \leq \frac{1}{2}\label{sum2}
\end{align}
\end{subequations}
which completes the proof.
\end{proof}

We divide the proof of Theorem \ref{lem01} into two parts.

\textbf{Part A: Proof of probability inequality (\ref{inequ01}).}
Similar to the argument in the proof of Theorem \ref{lem0}, we need to estimate
\begin{equation}
\mathbb{P}(\sum_{j=0}^{k-1}\mu_j (b_j^2-1)\geq \epsilon k),
\end{equation}
where $b_j= \sum_{i=0}^{d-1}V_{ij}a_i$ is not a Gaussian variable but a linear combination of subgaussian variables, i.e., each $b_j$ has the form of $W$ in Lemma \ref{lem3}. So we can't directly invoke Lemma \ref{lem2}.
Here we use the Laplace transform technique to complete the proof. We derive that
\begin{subequations}
\begin{align}
\mathbb{P}(\sum_{j=0}^{k-1}\mu_j (b_j^2-1)\geq \epsilon k)
                 &=      \mathbb{P}(\exp(\sum_{j=0}^{k-1}\lambda\mu_j (b_j^2-1))\geq \exp(\lambda\epsilon k)), ~~for ~~ \lambda>0 \\
                 &\leq  \inf_{\lambda>0}\frac{E[\exp(\sum_{j=0}^{k-1}\lambda\mu_j (b_j^2-1))]}{\exp(\lambda\epsilon k)} \label{Mar} \\
                 &=  \inf_{\lambda>0}\frac{\prod_{j=0}^{k-1}E[\exp(\lambda\mu_j (b_j^2-1))]}{\exp(\lambda\epsilon k)} \label{ind}\\
                 &\leq  \inf_{0<\lambda\mu_j<1/4\eta}\frac{\prod_{j=0}^{k-1}\exp(\varphi(\lambda\mu_j))}{\exp(\lambda\epsilon k)},\label{est}
\end{align}
\end{subequations}
where (\ref{Mar}) follows from the Markov inequality, (\ref{ind}) follows from the independence of $b_j$, and (\ref{est}) is due to the additional restriction of $\lambda$ and the expression of $\varphi(\cdot)$. Denote $f(\lambda)=\frac{8\eta^2\lambda^2}{1-4\eta \lambda}$; then it is a monotonically increasing function since its derivative is positive. Moreover $\|\mu\|_\infty\leq \tau \log^\delta n$ with probability at least $1-(d+k)e^{-\frac{\tau\log^\delta n}{2}}$ from (\ref{equ5}). Thus, together with Lemma \ref{lem3} we have
\begin{equation}
\varphi(\lambda\mu_j)\leq f(\lambda\mu_j)\leq \frac{8\eta^2\lambda^2\tau^2\log^{2\delta}n}{1-4\eta\lambda\tau\log^\delta n},~~for~~ j=0,\cdots, k-1.
\end{equation}
With this uniform bound and a tighter restriction of $\lambda$, we continue to estimate the probability inequality (\ref{est}) and get:
\begin{eqnarray}
\mathbb{P}(\sum_{j=0}^{k-1}\mu_j (b_j^2-1)\geq \epsilon k)
                 &\leq & \inf_{0<\lambda<1/4\eta\tau\log^\delta n}\exp\left(\frac{8k\eta^2\lambda^2\tau^2\log^{2\delta}n}{1-4\eta\lambda\tau\log^\delta n}-\lambda\epsilon k\right).
\end{eqnarray}
Take $\lambda=\frac{\theta\epsilon}{2\eta\tau \log^{2\delta} n}$, where $\theta$ is a positive parameter and $\epsilon$ obeys $0<\epsilon<1/2$. In order to satisfy the constraint $0<\lambda<(4\eta\tau\log^\delta n)^{-1}$, one needs to require that $0<\theta<1$. Now, using this special choice of $\lambda$, we get an upper bound
\begin{equation}
\mathbb{P}(\sum_{j=0}^{k-1}\mu_j (b_j^2-1)\geq \epsilon k)\leq \exp\left(-\frac{k\epsilon^2}{\log^{2\delta}n}(\frac{\theta}{2\eta\tau}-\frac{2\theta^2}{1-\frac{2\theta\epsilon}{\log^\delta n}})\right).
\end{equation}
For any fixed parameter $\delta$, let $n$ be big enough such that $\frac{2\theta\epsilon}{\log^\delta n}<\frac{1}{2}$. Then the upper bound can be relaxed to
\begin{equation}\label{equ11}
\mathbb{P}(\sum_{j=0}^{k-1}\mu_j (b_j^2-1)\geq \epsilon k)\leq \exp\left(-\frac{k\epsilon^2}{\log^{2\delta}n}(\frac{\theta}{2\eta\tau}-4\theta^2)\right).
\end{equation}
Let $c(\theta,\eta, \tau)=\theta(\frac{1}{2\eta\tau}-4\theta)$; then it is an positive constant depending on parameters $\theta, \eta, \tau$ if $\theta<\frac{1}{8\eta\tau}$. Thus, the probability inequality (\ref{inequ01}) holds.

\textbf{Part B: Proof of probability inequality (\ref{inequ02}).} In the following, we will show that the inequality (\ref{inequ02}) can be obtained in the same manner under the additional parameters constraint $\frac{1}{2}\frac{1-\beta^2}{1+\beta^2}\leq \eta \leq \frac{1}{2}$ where $\beta=\frac{\theta\epsilon}{\tau\log^{2\delta}n}<\frac{1}{2}$. Our aim is to estimate
\begin{equation}
\mathbb{P}(\sum_{j=0}^{k-1}\mu_j (1-b_j^2)\geq \epsilon k).
\end{equation}
Define a new function
\begin{equation}
\phi(\lambda)=\log E[\exp(\lambda (1-W^2))],
\end{equation}
where the random variable $W$ is defined as in Lemma \ref{lem3}.
Applying the Laplace transform technique and using the new function above, we get
\begin{subequations}
\begin{align}
\mathbb{P}(\sum_{j=0}^{k-1}\mu_j (1-b_j^2)\geq \epsilon k)
                 &\leq       \inf_{\lambda>0}\frac{E[\exp(\sum_{j=0}^{k-1}\lambda\mu_j (1-b_j^2))]}{\exp(\lambda\epsilon k)}  \\
                 &\leq  \inf_{0<\lambda\mu_j<1/4\eta}\frac{\prod_{j=0}^{k-1}\exp(\phi(\lambda\mu_j))}{\exp(\lambda\epsilon k)}.
\end{align}
\end{subequations}
If we could prove the following inequality
\begin{equation}\label{equ12}
\phi(\lambda)\leq \frac{8\eta^2\lambda^2}{1-4\eta \lambda}, ~~when~~2\lambda\eta=\beta<\frac{1}{2}~~and~~ \frac{1}{2}\frac{1-\beta^2}{1+\beta^2}\leq \eta \leq \frac{1}{2},
\end{equation}
where $\beta=\frac{\theta\epsilon}{\tau\log^{2\delta}n}$, then we can prove (\ref{inequ02}) as \textbf{Part A} because setting $\lambda=\frac{\theta\epsilon}{2\tau\eta \log^{2\delta} n}$, $\varphi(\lambda)$ and $\phi(\lambda)$ take the same upper bound.
Now, let us show inequality (\ref{equ12}) as follows:
\begin{subequations}
\begin{align}
\phi(\lambda) &\leq  -\frac{1}{2}\log(1+4\eta\lambda)+\lambda \label{addine} \\
                 &= -\frac{1}{2}\sum_{m=1}^{\infty}(-1)^{m-1}\frac{(4\eta\lambda)^m}{m}+\lambda \\
                 &= \frac{1}{2}\sum_{m=1}^\infty\frac{(4\eta\lambda)^m}{m} - \sum_{m\equiv1(mod 2)}^\infty \frac{(4\eta\lambda)^m}{m} + \lambda \\
                 &= \sum_{m=2}^{\infty}2^{m-1}\frac{(2\eta\lambda)^m}{m} +2\eta\lambda- \sum_{l=1}^{\infty}\frac{(4\eta\lambda)^{2l-1}}{2l-1}+\lambda,
\end{align}
\end{subequations}
where (\ref{addine}) follows from the first part of Lemma \ref{lem3}.
From (\ref{sum1}) to (\ref{sum2}), it holds under the condition $\lambda\eta<\frac{1}{4}$ that
\begin{equation}
\sum_{m=2}^{\infty}2^{m-1}\frac{(2\eta\lambda)^m}{m}\leq\frac{8\eta^2\lambda^2}{1-4\eta \lambda}.
\end{equation}
Denote $g(\lambda)=\sum_{l=1}^{\infty}\frac{(4\eta\lambda)^{2l-1}}{2l-1}$. Then
\begin{equation}
g(\lambda)=\sum_{l=1}^{\infty} 2\frac{2^{2l-2}}{2l-1}(2\eta\lambda)^{2l-1}\geq 2\sum_{l=1}^{\infty} (2\eta\lambda)^{2l-1}=\frac{4\eta\lambda}{1-4\eta^2\lambda^2},
\end{equation}
where the inequality follows from that $\frac{2^{2l-2}}{2l-1}\geq 1$ for every positive number $l$.
Hence, it suffices to show $2\eta\lambda-\frac{4\eta\lambda}{1-4\eta^2\lambda^2}+\lambda\leq 0$, or equivalently to show $\beta-\frac{2\beta}{1-\beta^2}+\frac{\beta}{2\eta}\leq 0$ since $2\lambda\eta=\beta$. After a simple calculation, one needs $\frac{1}{2}\frac{1-\beta^2}{1+\beta^2}\leq \eta$ which is just the assumed condition. Thus, the inequality (\ref{equ12}) holds and hence the estimate (\ref{inequ02}) follows.

\begin{remark}
The condition on the subgaussian constant $\frac{1}{2}\frac{1-\beta^2}{1+\beta^2}\leq \eta$ is only required in estimating (\ref{inequ02}). Such a requirement can guarantee inequality (\ref{equ12}) hold and hence gives us a uniform probability estimates. If one gives up the uniform expressions in (\ref{inequ01}) and (\ref{inequ02}), then the condition may be relaxed. We leave the possible improvements of the lower bound on the subgaussian constant open for further investigations.
\end{remark}

\small{
{\bf Acknowledgement.}
\medskip
 
 The authors thank the anonymous reviewers for their valuable comments and referring us to \cite{KMR} that greatly improve the quality of the paper, and thank Profs. Jan Vyb\'{\i}ral and Wotao Yin for their helpful discussion, and also thank Dr. Ming Yan for a careful reading of the manuscript.
The work of H. Zhang is supported by China Scholarship Council and the Graduate School of NUDT under Funding of Innovation B110202, and Hunan Provincial Innovation Foundation For Postgraduate CX2011B008.  The work of L.Z. Cheng is supported by the National Science Foundation of China under Grants No. 61271014 and No.61072118.  H. Zhang thanks Rice University, CAAM Department, for hosting him.

}
\end{document}